\newtheorem{lemma}{Lemma}
\begin{document}
\allowdisplaybreaks
\title{On the Information in Extreme Measurements for Parameter Estimation}

\author{\IEEEauthorblockN{Jonatan Ostrometzky\IEEEauthorrefmark{1} and Hagit Messer\IEEEauthorrefmark{1}~\IEEEmembership{Fellow,~IEEE}\\} 
\IEEEauthorblockA{\IEEEauthorrefmark{1}Department of Electrical Engineering, Columbia University in the City of New York \\ \IEEEauthorrefmark{2}School of Electrical Engineering,
Tel~Aviv University}
}


\maketitle


\begin{abstract}
This paper deals with parameter estimation from extreme measurements. While being a special case of parameter estimation from partial data, in scenarios where only one sample from a given set of K measurements can be extracted, choosing only the minimum or the maximum (i.e., extreme) value from that set is of special interest because of the ultra-low energy,
storage, and processing power required to extract extreme values from a given data set. We present a new methodology to analyze the performance of parameter estimation from extreme measurements. In particular, we present a general close-form approximation for the Cramer-Rao Lower Bound on the parameter estimation error, based on extreme values. We demonstrate our methodology on the case where the original measurements are exponential distributed, which is related to many practical applications. The analysis shows that the maximum values carry most of the information about the parameter of interest and that the additional information in the minimum is negligible. Moreover, it shows that for small sets of iid measurements (e.g. K=15) the use of the maximum can provide data compression with factor 15 while keeping about 50\% of the information stored in the complete set. 

\end{abstract}

\begin{IEEEkeywords}
Parameter Estimation, Extreme Values, Fisher Information, Cramer-Rao Lower Bound.
\end{IEEEkeywords}

\onehalfspacing

\section{Introduction}
\IEEEPARstart{N}{on}-Bayesian parameter estimation is usually based on the assumption that an observation vector (defined by $\underline{x}$) is available, and that the Probability Density Function (PDF) $f_{\underline{X}}(\underline{x};\underline{\theta})$ is known and is dependent on the parameter vector $\underline{\theta}$, to be estimated \cite{Key}. When dealing with the case where the observations are identically and independent distributed (iid), knowing the PDF $f_{X_i}(x_i;\underline{\theta})$ of each observation $x_i$ is sufficient. However, there are certain situations in which the full observation vector is unavailable or unobservable. Specifically, in this paper, we focus on situations where from a set of observations, only the extreme values, i.e., the minimum and/or the maximum observations are reported.


We deal with a sequence of $N\cdot{K}$ iid observations. Define a sub-sequence of $K$ iid observations which constitute the $i^{th}$ group (i.e., interval), by the vector $\underline{x}_i$, and the minimal and the maximal observed values in the $i^{th}$ interval by $y_{min_i}\equiv{min(\underline{x}_i)}$ and $y_{max_i}\equiv{max(\underline{x}_i)}$, respectively. Thus, the minimum-observation vector of $N$ non-overlapping intervals is defined by $\underline{y}_{min}=[y_{{min}_1},y_{{min}_2},\cdots,y_{{min}_N}]^T$ and the maximum-observation vector of the same $N$ intervals is defined by $\underline{y}_{max}=[y_{{mx}_1},y_{{max}_2},\cdots,y_{{max}_N}]^T$. The complete (unobservable) data is the $N\cdot{K}$-dimension observation vector which can be formulated as: $\underline{x}=[\underline{x}^T_1,\underline{x}^T_2,\cdots,\underline{x}^T_N]^T$.

Such scenarios, where only $\underline{y}_{min}$ and/or $\underline{y}_{max}$ are given rather than $\underline{x}$, can be found in numerous fields, such as earthquake recurrence estimation \cite{wang2011}, wind research \cite{walshaw1994getting}, heat accumulation \cite{MaxTemp1}, and precipitation monitoring \cite{ExtremeRainEx1}, to name a few. Furthermore, there are practical reasons which may force compressing the entire dataset to the $\underline{y}_{min}$ and/or $\underline{y}_{max}$ observations, such as energy constraints in sensor networks \cite{EnvSN}. E.g., most Network Management Systems (NMS) which monitor the backhaul of the cellular networks, although sample the network microwave links channels Received Signal Levels (RSL) at a high frequency (of up to 10Hz), usually report only the minimum and the maximum observed values per 15-minute intervals \cite{RemkoCountry2,Eric10S,yonidiss}. Furthermore, reporting only the extreme values from a given set of observations is especially attractive in the emerging field of the Internet-of-Things (IoT), since, apart from transmission costs, it has been shown that the extraction of the minimum and the maximum values from a sequence can be done extremely efficient with respect to processing power, delay, energy, and memory requirements \cite{lemire2006streaming}. Indeed, there are scenarios where it could be more beneficial to treat other types of information rather than the minimum and/or the maximum (e.g., when dealing with the Normal distribution, calculating the \emph{mean} and the \emph{standard deviation} from the original observations may be preferable, since they are sufficient statistics). However, one still needs to consider the fact that under extreme energy and cost constraints, the additional hardware needed for these kinds of calculations may not be available. The minimum and the maximum values, on the other hand, as they are the extremes of the original observation vector, can be identified using very basic circuits, without the need for extra processing power \cite{sedraandsmith}. 

Thus, the motivation to understand the estimation accuracy that can be achieved by using only extreme measurements (in comparison with the full original sample set) is strong, especially as giving access only to extreme values may result in much more efficient designs of future applications, if the achievable estimation accuracy is sufficient.

In this paper we show that it is possible to evaluate the accuracy of the estimate of a parameter vector $\underline{\theta}$ given only the minimum, the maximum, or the minimum \emph{and} the maximum observation vectors ($\underline{y}_{min}$, $\underline{y}_{max}$, and \{$\underline{y}_{min}$,$\underline{y}_{max}$\}, respectively). We describe the corresponding Fisher Information Matrices (FIM) of the (asymptotically) optimal estimators in the Minimum Mean Squared Error sense (MMSE), using a novel approximation which simplify the resulted expressions so that they become analytically solvable, and, by means of the approximated Cramer-Rao Lower Bound (CRLB), we present an analysis of the achievable estimation performance. 

\subsection{Related Studies}
Many previous studies have dealt with the problem of parameter estimation involving $\underline{y}_{min}$ or $\underline{y}_{max}$. The subject of most of these studies, however, was to characterise the extreme values themselves. I.e., these studies aimed to estimate the properties of extreme events (such as catastrophic earthquakes probabilities \cite{wang2011}, variations in climate \cite{ClimateExtreme92}, extreme floods \cite{Flood2007,Katz3}, long precipitation events \cite{ExtremeRainEx1}, daily rainfall \cite{coles2003anticipating}, etc.), usually by taking advantage of the Extreme Value Theory (EVT). The EVT had started to attract interest in the last century, and was formalized by \emph{Gumbel} in 1958 \cite{Gumbel}. In its base, the EVT states that under some regularity conditions, the PDF of maximum (or minimum, sometimes under certain transformations) values converges asymptotically to the Generalized Extreme Value (GEV) PDF, $f_{\underline{Y}}(\underline{y};\underline{\psi})$ \cite{FisherT,walshaw2013}. Thus, it is possible to estimate the parameter vector $\underline{\psi}$ \cite{GEVEstimator} (or other properties such as moments \cite{hosking1985,Dekkers}), directly from $\underline{y}_{min}$,$\underline{y}_{max}$.

In recent years, the estimation of the parameter vector of the original PDF, $\underline\theta$, from extreme values has been partially covered by the \emph{Record Theory}, which deals with ordered data \cite{records}. And indeed, numerous studies discussed the estimation of $\underline\theta$ from maximum or minimum values \cite{recordsparam1986}, \cite{records}, as well as the FIM properties of those estimates \cite{ahmadi2001fisher,hofmann2003,recordsreview2009}. However, as described in those studies, the presented FIM expressions, apart for some specific examples such as for the exponential distribution case \cite{hofmann2003}, are implicit and may not hold a simple solution \cite{recordsreview2009}. Recently, we have shown that an estimate $\hat{\underline{\theta}}$ of the original parameter vector $\underline{\theta}$ can be evaluated directly from the GEV estimated parameter vector $\hat{\underline{\psi}}$ \cite{SAM2014}. This method, however, relies on the asymptotic convergence of the PDF of the maximum values to the GEV, and thus, cannot guarantee optimal performance in non-asymptotic conditions. Several other studies presented approaches for the estimation of $\underline{\theta}$ from incomplete observations-set, such as the Expectation-Maximization (EM) algorithm \cite{EM}, for which $\underline{x}$ is the complete information and $\underline{y}_{min}$,$\underline{y}_{max}$ are the incomplete information. However, the estimator proposed in these studies basically extant approximation of the full observation vector $\hat{\underline{x}}$ from the set of the incomplete data (i.e., the extremes $\underline{y}_{min}$,$\underline{y}_{max}$), and estimate $\underline{\theta}$ from $\hat{\underline{x}}$. The EM, as other methods which first recover $\hat{\underline{x}}$ (or its properties) and then estimate $\underline{\theta}$ assuming $\hat{\underline{x}}$ is the observation vector \cite{MaxTemp1,LibermanTh}, are usually case specific, often require pre-calibration stages \cite{RemkoCountry,Yoni1}, and are sub-optimal due to the approximation of $\underline{x}$.

Different from these (among many) past studies, in this paper we are interested in evaluating the achievable performance of the estimation of the parameter vector $\underline{\theta}$ of the \emph{original} PDF, directly from the available extreme measurements. We present a tool which approximates the cumbersome expressions of the relevant FIM into a simple, practical, and solvable form, without a significant loss of accuracy. Furthermore, we extend previous results and discuss the case where both the minimum \emph{and} the maximum values are being used in the same estimation process.

\subsection{Summary of the Results}
We consider the (asymptotically) optimal estimates of the parameter vector ,$\underline\theta$ based on the set of the extremes $\underline{y}_{min}$,$\underline{y}_{max}$, and we present a novel and simple performance analysis tools. We consider three cases:
\begin{itemize}
\item[1.]{Only the minimum-observation vector, $\underline{y}_{min}$, is available.}
\item[2.]{Only the maximum-observation vector, $\underline{y}_{max}$, is available.}
\item[3.]{Both the minimum-observation vector, $\underline{y}_{min}$, \emph{and} the maximum-observation vector, $\underline{y}_{max}$, are available.}
\end{itemize}

For each of the three cases we derive the appropriate FIM, and present simplified approximated expressions, to be used in the comparative performance analysis. The optimal estimator, $\hat{\underline{\theta}}_{opt}$, makes use of the entire range of $K\cdot{N}$ samples, once the observation vector $\underline{x}$ is available. Using only a part of the available samples results in a sub-optimal estimation, and thus, decreases the estimation accuracy \cite{Key}. While it is obvious that any estimation based on $\underline{y}_{min}$ and/or $\underline{y}_{max}$ is suboptimal, our aim is to quantify the relative performance loss of the estimators based on $\underline{y}_{min}$ and/or $\underline{y}_{max}$, relative to the performance of $\hat{\underline{\theta}}_{opt}$.


We show that whereas using both the minimum and the maximum measurements in the estimation is preferred (regarding the estimates accuracy), for certain distributions, most of the information about $\underline{\theta}$ is stored in one of the extremes (i.e., only the minimum or only the maximum values contain most of the information needed for the estimation). And, by using the tools we present, it is possible to pre-determine which of the extremes is preferable, and how much information it contains. We validate our findings by a series of simulations, compare the results with previously published studies for the exponential case, where analytical solutions for the FIM expressions exist \cite{records,hofmann2003}, and show that our proposed approximations are valid. 

Thus, the main contributions of this paper are:
\begin{enumerate}
\item{Presentation of a novel approximation for the FIM of estimates based on extreme measurements (minimum, maximum, \emph{and} minimum and maximum combined) by using the Characteristic Values of the Extremes, in order to express an approximate CRLB analytically;}
\item{Analysis of extreme measurements in terms of the information they hold (with respect to parameter estimation) compared with the original set of samples.} 
\end{enumerate}

\subsection{Organization} 
The rest of the paper is organized as follows: In \emph{Section II} the methodology, tools, and our analysis are presented. \emph{Section III} presents an explicit development of our tools, followed by a simulation which is used in order to validate the results. A comparison with previously published results is also included and discussed in this section. Lastly, in \emph{Section V} we conclude this paper.

\section{Methodology and Tools}
The asymptotically \emph{optimal} (in the MMSE sense) estimator $\hat{\underline{\theta}}_{opt}$ is given by:
\begin{IEEEeqnarray}{cCl}
\hat{\underline{\theta}}_{opt}=\arg\max\limits_{\underline{\theta}}\left\{\sum_{i=1}^{N\cdot{}K}f_{{X}}({x_i};\underline{\theta})\right\} \IEEEyesnumber \label{MLEOpt} 
\IEEEyesnumber
\end{IEEEeqnarray}
where ${x}_i$ are the original iid observations (i.e., measurements), and $\underline{\theta}$ is the parameter vector, to be estimated. And, under mild regularity conditions, $\hat{\underline{\theta}}_{opt}$ achieves the corresponding Cramer-Rao Lower Bound (CRLB) \cite{Key}, defined by $\text{CRLB}_{opt}$:
\begin{IEEEeqnarray}{cCl} \label{CRLBOpt}
\text{CRLB}_{opt}=J^{opt}(\underline{\theta})^{-1} \IEEEyesnumber \IEEEyessubnumber \\
{J_{m,n}^{opt}(\underline{\theta})}=-\sum_{i=1}^{N\cdot{}K}{E\left\{\frac{\partial^2}{\partial{\theta}_m\partial{\theta}_n}\log\left[f_{{X}}({x_i};\underline{\theta})\right]\right\}} \IEEEyessubnumber
\end{IEEEeqnarray}
where $J^{opt}(\underline{\theta})$ is the FIM consisting of the entries $J_{m,n}^{opt}(\underline{\theta})$. 

In addition to the optimal estimator, we define a sub-optimal estimator, $\hat{\underline{\theta}}_L$, based on partial samples $L\in{\mathbb{N}}$, s.t. $1\leq{L}\leq{K}$ observations from each of the $N$ non overlapping sub-intervals:
\begin{IEEEeqnarray}{cCl}
\hat{\underline{\theta}}_{L}=\arg\max\limits_{\underline{\theta}}\left\{\sum_{i=1}^{N}\sum_{j=1}^{{L}}f_{{X}}({x_{i,j}};\underline{\theta})\right\} \IEEEyesnumber \label{MLEl} \IEEEyesnumber
\end{IEEEeqnarray}
which achieves
\begin{IEEEeqnarray}{cCl}
\text{CRLB}_{L}=J^{L}(\underline{\theta})^{-1} \IEEEyesnumber \label{CRLBl}  \IEEEyessubnumber \label{CRLBla} \\ 
{J_{m,n}^{L}(\underline{\theta})}=-\sum_{i=1}^{N}\sum_{j=1}^{{L}}{E\left\{\frac{\partial^2\log\left[f_{{X}}({x}_{i,j};\underline{\theta})\right]}{\partial{\theta}_m\partial{\theta}_n}\right\}}  \IEEEyessubnumber \label{CRLBlb}
\end{IEEEeqnarray}
where ${x}_{i,j}$ represents the partial measurements, and $J^{L}(\underline{\theta})$ is the FIM (of the entries $J_{m,n}^{L}(\underline{\theta})$). This sub-optimal estimator $\hat{\underline{\theta}}_L$ and its corresponding CRLB ($\text{CRLB}_L$) will be used as a performance benchmark tool in the sequel.


\subsection{$\underline{y}_{min}$ and/or $\underline{y}_{max}$ based Estimation}
The PDF of an extreme value taken from the $K$ measurements which constitute the $i^{th}$ interval, $y_{min_i}$ and $y_{max_i}$, and the joint PDF of $y_{min_i}$ and $y_{max_i}$ are given by \cite{Gumbel}:
\begin{IEEEeqnarray}{cCl} \label{fgen}
f_{Y_{min}}(y_{min};\underline{\theta})= \IEEEyesnumber \IEEEnonumber 
K\left[1-F_{X}(y_{min};\underline{\theta})\right]^{K-1}f_{X}(y_{min};\underline{\theta}) \IEEEyessubnumber \label{fmin} \\
f_{Y_{max}}(y_{max};\underline{\theta})=K\left[F_{X}(y_{max};\underline{\theta})\right]^{K-1}f_{X}(y_{max};\underline{\theta}) \IEEEyessubnumber \label{fmax} \\
f_{Y_{min},Y_{max}}(y_{min},y_{max};\underline{\theta})=
K(K-1)\left[F_{X}(y_{max};\underline{\theta})-F_{X}(y_{min};\underline{\theta})\right]^{K-2}\cdot{} \IEEEyessubnumber 
f_{X}(y_{min};\underline{\theta})f_{X}(y_{max};\underline{\theta}) 
\end{IEEEeqnarray}
where $F(\cdot)$ represents a Cumulative Density Function (CDF). 

Due to the iid properties of $\underline{x}$, which guarantee that the extreme values are also iid\footnote{Meaning, that the set of the maximum values are iid, and the set of the minimum values are iid. However, for each sub-interval, the minimum value and the maximum value are dependant for any $K<\infty$ \cite{coles1999dependence}.}, the PDF of $\underline{y}_{min}$, $\underline{y}_{max}$ and the joint PDF of $\underline{y}_{min}$ and $\underline{y}_{max}$ can be easily expressed:
\begin{IEEEeqnarray}{cCl} \label{fgeni}
f_{\underline{Y}_{min}}(\underline{y}_{min};\underline{\theta})= \IEEEyesnumber \IEEEnonumber 
{K}^N\prod_{i=1}^{N}\left[1-F_{X}(y_{min_i};\underline{\theta})\right]^{K-1}f_{X}(y_{min_i};\underline{\theta}) \IEEEyessubnumber \\
\label{fmini}
f_{\underline{Y}_{max}}(\underline{y}_{max};\underline{\theta})= \IEEEnonumber 
{K}^N\prod_{i=1}^{N}\left[F_{X}(y_{max_i};\underline{\theta})\right]^{K-1}f_{X}(y_{max_i};\underline{\theta}) \IEEEyessubnumber \\
\label{fmaxi}
f_{\underline{Y}_{min},\underline{Y}_{max}}(\underline{y}_{min},\underline{y}_{max};\underline{\theta})= \left[{K(K-1)}\right]^N\cdot{} 
\prod_{i=1}^{N}\left[F_{X}(y_{max_i};\underline{\theta})-F_{X}(y_{min_i};\underline{\theta})\right]^{K-2}\cdot{} \IEEEyessubnumber 
f_{X}(y_{min_i};\underline{\theta})f_{X}(y_{max_i};\underline{\theta})  \label{fminmaxi}
\end{IEEEeqnarray}

From which, the desired estimators, can be directly formulated. For the sequence, we define
 $\hat{\underline{\theta}}_{min}$, $\hat{\underline{\theta}}_{max}$, and $\hat{\underline{\theta}}_{mix}$ as the estimates based on $\underline{y}_{min}$, $\underline{y}_{max}$, and \{$\underline{y}_{min}$,$\underline{y}_{max}$\}, respectively. 
 
The corresponding FIM of these estimates is given by:
\begin{IEEEeqnarray}{cCl} 
{J^{min}_{m,n}(\underline{\theta})}=-{E\left\{\frac{\partial^2}{\partial{\theta}_m\partial{\theta}_n}\log\left[f_{\underline{Y}_{min}}(\underline{y}_{min};\underline{\theta})\right]\right\}}
=-\sum_{i=1}^{N}{E\left\{\frac{\partial^2}{\partial{\theta}_m\partial{\theta}_n}\log\left[f_{{Y}_{min_i}}({y_{min_i}};\underline{\theta})\right]\right\}} \IEEEyesnumber \label{FIMmixt}\IEEEyessubnumber \label{CRLBmi} \\
{J^{max}_{m,n}(\underline{\theta})}=-{E\left\{\frac{\partial^2}{\partial{\theta}_m\partial{\theta}_n}\log\left[f_{\underline{Y}_{max}}(\underline{y}_{max};\underline{\theta})\right]\right\}}
=-\sum_{i=1}^{N}{E\left\{\frac{\partial^2}{\partial{\theta}_m\partial{\theta}_n}\log\left[f_{{Y}_{max_i}}({y}_{max_i};\underline{\theta})\right]\right\}} \IEEEyessubnumber \label{CRLBmx} \\
{J^{mix}_{m,n}(\underline{\theta})}=-{E\left\{\frac{\partial^2\log\left[f_{\underline{Y}_{{min}},\underline{Y}_{{max}}}(\underline{y}_{min},\underline{y}_{max};\underline{\theta})\right]}{\partial{\theta}_m\partial{\theta}_n}\right\}}
=-\sum_{i=1}^{N}{E\left\{\frac{\partial^2\log\left[f_{{Y}_{min_i},Y_{max_i}}({y}_{{min_i}},y_{max_i};\underline{\theta})\right]}{\partial{\theta}_m\partial{\theta}_n}\right\}} \IEEEyessubnumber \label{CRLBmix}
\end{IEEEeqnarray}

And, as in \eqref{CRLBOpt}, under mild regularity conditions, the asymptotic covariance matrices of the estimates $\hat{\underline{\theta}}_{min}$, $\hat{\underline{\theta}}_{max}$, and $\hat{\underline{\theta}}_{mix}$ achieve the corresponding CRLBs:
\begin{IEEEeqnarray}{cCl} \label{CRLBmixt}
\text{CRLB}_{min}=J^{min}(\underline{\theta})^{-1} \IEEEyesnumber \IEEEyessubnumber \\
\text{CRLB}_{max}=J^{max}(\underline{\theta})^{-1} \IEEEyessubnumber \\
\text{CRLB}_{mix}=J^{mix}(\underline{\theta})^{-1} \IEEEyessubnumber
\end{IEEEeqnarray}
where $\text{CRLB}_{min}$, $\text{CRLB}_{max}$, and $\text{CRLB}_{mix}$, are the corresponding asymptotic covariances of $\hat{\underline{\theta}}_{min}$, $\hat{\underline{\theta}}_{max}$, and $\hat{\underline{\theta}}_{mix}$, respectively.

In order to simplify the presentation of our approach, the parameter vector $\underline\theta$ is reduced to a single parameter (i.e., $\underline\theta\equiv\theta$). The same development stages and conclusions presented in the sequel can be generalized for the multi-parameter case.

\subsection{The Relationship Between $J^{min}(\theta)$ and $J^{max}(\theta)$}
While it is obvious that the following relationship stands for any distribution:
\begin{IEEEeqnarray}{cCl} 
\left\{J^{min}(\theta),{J}^{max}(\theta)\right\} \label{MajorG} 
\leq{J}^{mix}(\theta)\leq{J}^{opt}(\theta) \IEEEyesnumber
\end{IEEEeqnarray}
the specific order of $J^{min}(\theta)$ and $J^{max}(\theta)$ is distribution dependent.

Next, we introduce the following:
\begin{lemma}
\label{Lemma1}
Let $\underline{z}$ be a ${K}$-dimensional observation vector, the entries of which are iid with PDF $f_Z(z;\underline\theta)$, of \textbf{finite} sample space of $\Omega=[C_1,C_2]$, meaning that the entries of $\underline{z}$ are bounded so that $\forall{i}:C_1\leq{z_i}\leq{C_2}$. Define two random variables, $Y_{min}$ and $Y_{max}$, so that ${Y}_{min}$=$min(\underline{z})$ and ${Y}_{max}$=$max(\underline{z})$. For any given function $g(\cdot)$ that satisfies a regularity condition so that $\forall \alpha: \left|\frac{\partial{^i}{g(\alpha)}}{\partial{\alpha}{^i}}\right|<\infty;i\in{\mathbb{N}}$, the following hold:
\begin{IEEEeqnarray}{cCl}
g(Y_{min})\xrightarrow[K\rightarrow\infty]{{w.p.1}}g(E[Y_{min}]) \IEEEyesnumber \IEEEyessubnumber  \label{eqtaylormindeducted} \\
g(Y_{max})\xrightarrow[K\rightarrow\infty]{{w.p.1}}g(E[Y_{max}]) \IEEEyessubnumber \label{eqtaylormaxdeducted}
\end{IEEEeqnarray}
\end{lemma}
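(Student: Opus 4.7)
The plan is to exploit the fact that, under the finite-support assumption, both the random variable $Y_{\max}$ and its expectation $E[Y_{\max}]$ collapse onto the upper endpoint $C_2$ of the sample space as $K\to\infty$ (and symmetrically $Y_{\min},E[Y_{\min}]\to C_1$), so that the two sides of~\eqref{eqtaylormaxdeducted} share the common limit $g(C_2)$ by continuity of $g$. With that in mind, the actual work reduces to a routine concentration step combined with a first-order Taylor expansion.

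I would carry this out in three steps. First, for every $\varepsilon>0$, by the iid assumption,
\begin{equation*}
P\bigl(Y_{\max}<C_2-\varepsilon\bigr)=\bigl[F_{Z}(C_2-\varepsilon;\underline\theta)\bigr]^{K},
\end{equation*}
and since $C_2$ is the essential supremum of the support we have $F_Z(C_2-\varepsilon;\underline\theta)<1$, so these probabilities are summable in $K$ and the Borel--Cantelli lemma yields $Y_{\max}\to C_2$ with probability one. Second, because $C_1\le Y_{\max}\le C_2$ is uniformly bounded, dominated convergence gives $E[Y_{\max}]\to C_2$ as well. Third, a first-order Taylor expansion of $g$ about $E[Y_{\max}]$ reads
\begin{equation*}
g(Y_{\max})-g\bigl(E[Y_{\max}]\bigr)=g'(\xi_{K})\bigl(Y_{\max}-E[Y_{\max}]\bigr)
\end{equation*}
for some intermediate point $\xi_{K}\in[C_1,C_2]$. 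Since $|g'|$ is bounded on $[C_1,C_2]$ by the regularity hypothesis on $g$, and $Y_{\max}-E[Y_{\max}]\to 0$ almost surely (both terms tending to $C_2$), the right-hand side vanishes w.p.\ 1, which is exactly~\eqref{eqtaylormaxdeducted}. The argument for $Y_{\min}$ is entirely symmetric, starting from $P(Y_{\min}>C_1+\varepsilon)=\bigl[1-F_Z(C_1+\varepsilon;\underline\theta)\bigr]^{K}\to 0$.

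The step to tread carefully on is the interpretation of the phrase ``finite sample space $\Omega=[C_1,C_2]$'': the argument genuinely requires $C_1$ and $C_2$ to be the essential infimum and supremum of $f_Z(\cdot;\underline\theta)$, so that $F_Z(C_2-\varepsilon;\underline\theta)<1$ and $F_Z(C_1+\varepsilon;\underline\theta)>0$ for every $\varepsilon>0$. If $[C_1,C_2]$ were merely a loose enclosing interval, the limits would be the true endpoints of the support rather than $C_1$ and $C_2$, and the statement would need a matching reinterpretation. I would also note that only the $i=1$ case of the derivative-boundedness hypothesis is actually used here---the higher-order conditions on $g$ are stronger than needed for this particular lemma and are presumably invoked in later Taylor arguments of the paper involving log-densities and their partial derivatives with respect to $\theta$.
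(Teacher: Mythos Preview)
Your proof is correct and follows essentially the same strategy as the paper's own: establish that both $Y_{\max}$ and $E[Y_{\max}]$ converge to the endpoint $C_2$ (and symmetrically for $Y_{\min}$ and $C_1$), then use a Taylor expansion about $E[Y_{\max}]$ to control $g(Y_{\max})-g(E[Y_{\max}])$. The only cosmetic difference is that the paper writes out the full Taylor series and lets all higher-order terms vanish (thereby using the full $i\in\mathbb{N}$ regularity hypothesis), whereas your mean-value-theorem version needs only the $i=1$ bound---a point you already flagged.
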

\begin{proof}
Eqs. \eqref{fmin} and \eqref{fmax}, in combination with the finite sample space $\Omega$, yield
\begin{IEEEeqnarray}{cCl}
Y_{min}\xrightarrow[K\rightarrow\infty]{{w.p.1}}C_1 \IEEEyesnumber \IEEEyessubnumber \label{eqyminc1} \\
Y_{max}\xrightarrow[K\rightarrow\infty]{{w.p.1}}C_2 \IEEEyessubnumber \label{eqymaxc2}
\end{IEEEeqnarray}
The Taylor expansions of $g(Y_{min})$ about $E[Y_{min}]$ and $g(Y_{max})$ about $E[Y_{max}]$ are
\begin{IEEEeqnarray}{cCl}
g(Y_{min})=g(E[Y_{min}])+ \IEEEyesnumber 
\sum_{i=1}^{\infty}\frac{{(Y_{min}-E[Y_{min}])}^i}{i!}\cdot{}\left.\frac{\partial^i{g}(Y_{min})}{\partial{Y^i_{min}}}\right|_{E[Y_{min}]} \IEEEyessubnumber \label{eqtaylormin} \\
g(Y_{max})=g(E[Y_{max}])+
\sum_{i=1}^{\infty}\frac{{(Y_{max}-E[Y_{max}])}^i}{i!}\cdot{}\left.\frac{\partial^i{g}(Y_{max})}{\partial{Y^i_{max}}}\right|_{E[Y_{max}]} \IEEEyessubnumber \label{eqtaylormax}
\end{IEEEeqnarray}
Applying the \emph{Expected Value} operator on Eqs. \eqref{eqyminc1} and \eqref{eqymaxc2} yields 
\begin{IEEEeqnarray}{cCl}
E[Y_{min}]\xrightarrow[K\rightarrow\infty]{{w.p.1}}C_1 \IEEEyesnumber \IEEEyessubnumber \label{eqexpmin} \\
E[Y_{max}]\xrightarrow[K\rightarrow\infty]{{w.p.1}}C_2 \IEEEyessubnumber \label{eqexpmax}
\end{IEEEeqnarray}
which dictates that the sums of Eqs. \eqref{eqtaylormin} and \eqref{eqtaylormax} converge to zero:
\begin{IEEEeqnarray}{cCl}
\sum_{i=1}^{\infty}\frac{{(Y_{min}-E[Y_{min}])}^i}{i!}\cdot{} \IEEEyesnumber  
\left.\frac{\partial^i{g}(Y_{min})}{\partial{Y^i_{min}}}\right|_{E[Y_{min}]}\xrightarrow[K\rightarrow\infty]{{w.p.1}}0 \IEEEyesnumber \IEEEyessubnumber \\
\sum_{i=1}^{\infty}\frac{{(Y_{max}-E[Y_{max}])}^i}{i!}\cdot{} 
\left.\frac{\partial^i{g}(Y_{max})}{\partial{Y^i_{max}}}\right|_{E[Y_{max}]}\xrightarrow[K\rightarrow\infty]{{w.p.1}}0 \IEEEyessubnumber
\end{IEEEeqnarray}
This completes the proof.
\end{proof}

\begin{lemma}
\label{Lemma2}
Let $\underline{z}$ be a ${K}$-dimensional observation vector, the entries of which are iid with PDF $f_Z(z;\underline\theta)$, of \textbf{infinite} sample space of $\Omega=(-\infty,\infty)$, meaning that the entries of $\underline{z}$ are unbounded so that $\forall{i}:-\infty{}\leq{z_i}\leq{\infty}$. Define two random variables, $Y_{min}$ and $Y_{max}$, so that ${Y}_{min}$=$min(\underline{z})$ and ${Y}_{max}$=$max(\underline{z})$ each converges in distribution to an asymptotic GEV PDF with shape parameter $\epsilon{<}{\frac{1}{2}}$. Then, the following hold:
\begin{IEEEeqnarray}{cCl}
\frac{Y_{min}}{E[Y_{min}]}\xrightarrow[K\rightarrow\infty]{{a.s.}}1 \IEEEyesnumber \IEEEyessubnumber \\ \label{eqlemma2min}
\frac{Y_{max}}{E[Y_{max}]}\xrightarrow[K\rightarrow\infty]{{a.s.}}1 \IEEEyessubnumber \label{eqlemma2max}
\end{IEEEeqnarray}
\end{lemma}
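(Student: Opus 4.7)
The plan is to invoke Extreme Value Theory to write each extreme in the form $Y_{max}=b_K+a_K W_K$, with normalizing constants $a_K>0$, $b_K$ chosen so that $W_K$ converges in distribution to a GEV random variable $G$ of shape $\epsilon<1/2$, and then to exploit the condition $\epsilon<1/2$, which ensures that $G$ possesses a finite mean $\mu_G$ and variance $\sigma_G^2$. The minimum case will follow by symmetry: applying the argument to $-\underline{z}$ turns $Y_{min}$ into a maximum problem, so I focus on $Y_{max}$.

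First I would establish, under the standard uniform-integrability conditions that accompany GEV convergence when $\epsilon<1/2$, the asymptotic equivalents
\[
E[Y_{max}] \;=\; b_K + a_K\mu_G + o(a_K), \qquad \mathrm{Var}(Y_{max}) \;=\; a_K^2\sigma_G^2 + o(a_K^2).
\]
Since the support is unbounded above, $b_K\to\infty$; and for the distributions of interest (Gumbel domain of attraction: e.g.\ exponential with $b_K=\log K$, $a_K=1$, or Gaussian with $b_K\sim\sqrt{2\log K}$, $a_K\sim 1/\sqrt{2\log K}$) one has $a_K/b_K\to 0$. Combining these yields
\[
\mathrm{Var}\!\left(\frac{Y_{max}}{E[Y_{max}]}\right) \;=\; \frac{\mathrm{Var}(Y_{max})}{(E[Y_{max}])^2} \;\sim\; \frac{a_K^2\sigma_G^2}{b_K^2} \;\xrightarrow[K\to\infty]{}\; 0,
\]
so Chebyshev's inequality immediately gives convergence in probability of $Y_{max}/E[Y_{max}]$ to $1$.

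To upgrade the in-probability convergence to the almost-sure one claimed in the lemma, I would couple all the maxima on a common probability space by letting $X_1,X_2,\ldots$ be iid and setting $Y_{max,K}=\max_{i\leq K}X_i$, which is monotone nondecreasing in $K$. Along a geometric subsequence $K_m=\lfloor\lambda^m\rfloor$ the Chebyshev tail bounds become summable, so Borel--Cantelli yields $Y_{max,K_m}/E[Y_{max,K_m}]\to 1$ almost surely; the monotonicity of $K\mapsto Y_{max,K}$, together with the mild regularity $E[Y_{max,K_{m+1}}]/E[Y_{max,K_m}]\to 1$ (a direct consequence of $a_K/b_K\to 0$), then sandwiches the full sequence and extends the almost-sure limit from the subsequence to every $K$. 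The main obstacle I anticipate is precisely this last step: the passage from in-probability to almost-sure convergence, which is delicate because $Y_{max,K}$ is defined separately for each $K$ and one must introduce the coupling and verify that $E[Y_{max,K}]$ varies slowly enough along the subsequence for the sandwich to work; the preceding variance bound is essentially routine once the EVT representation is in hand.
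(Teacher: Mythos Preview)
Your core argument coincides with the paper's: both observe that under the GEV hypothesis with shape $\epsilon<\tfrac12$ the variance of $Y_{max}$ remains controlled while $E[Y_{max}]\to\infty$, so that $\mathrm{Var}\bigl(Y_{max}/E[Y_{max}]\bigr)\to 0$ and Chebyshev's inequality delivers convergence. The paper's proof is in fact terser than yours: it simply asserts that $\mathrm{Var}(Y_{max})$ tends to a finite constant $D_2$ (rather than your more flexible $a_K^2\sigma_G^2$), notes that $E[Y_{max}]\to\infty$, and invokes Chebyshev to conclude.

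Where you genuinely go beyond the paper is in the passage from convergence in probability to almost-sure convergence. The paper's proof stops at Chebyshev and declares the lemma proved, but Chebyshev alone yields only convergence in probability, not the ``a.s.'' claimed in the statement. Your Borel--Cantelli argument along a geometric subsequence, combined with the monotonicity of $K\mapsto Y_{max,K}$ on a common probability space and the slow variation of $E[Y_{max,K}]$, is a legitimate way to close that gap; the paper simply does not address it. Your EVT parametrization with explicit $a_K,b_K$ is also more careful than the paper's blanket assertion that the variance tends to a constant, which is strictly correct only when $a_K$ itself stabilizes (e.g.\ the exponential case), whereas your formulation accommodates cases like the Gaussian where $a_K\to 0$. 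In short: same skeleton, but your version is more rigorous and more general, and it correctly flags and attempts to resolve an issue the paper glosses over.
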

\begin{proof}
Define $err_{min}\equiv{Y_{min}}-E[Y_{min}]$ and $err_{max}\equiv{Y_{max}}-E[Y_{max}]$. Based on the properties of the GEV \cite{Gumbel}, two finite constants exist (defined by $D_1$ and $D_2$), so that
\begin{IEEEeqnarray}{cCl}
\lim_{K \to \infty} \left\{var(err_{min})\right\}=D_1 \IEEEyesnumber \IEEEyessubnumber \label{errvarmin} \\
\lim_{K \to \infty} \left\{var(err_{max})\right\}=D_2 \IEEEyessubnumber \label{errvarmax}
\end{IEEEeqnarray}
which, based on Chebyshev's inequality \cite{papoulis2002probability}, in combination with the fact that
\begin{IEEEeqnarray}{cCl}
E[Y_{min}]\xrightarrow[K\rightarrow\infty]{{}}{-\infty} \IEEEyesnumber \IEEEyessubnumber \label{errmin} \\
E[Y_{max}]\xrightarrow[K\rightarrow\infty]{{}}\infty \IEEEyessubnumber \label{errmax}
\end{IEEEeqnarray}
completes this proof.
\end{proof}

Under the conditions of \emph{Lemma} \ref{Lemma1}, the variances of $Y_{min}$ and $Y_{max}$ converge to zero (as $K$ increases). Thus,
\begin{IEEEeqnarray}{cCl}
E[g(Y_{min})]={g}(E[Y_{min}])+\mathcal{R}^{min} \IEEEyesnumber \IEEEyessubnumber  \\
E[g(Y_{max})]={g}(E[Y_{max}])+\mathcal{R}^{max} \IEEEyessubnumber
\end{IEEEeqnarray}
where $\mathcal{R}^{min}$ and $\mathcal{R}^{max}$ are the residues, which converge to zero as $K$ increases.

\emph{Lemma} \ref{Lemma2} is weaker than \emph{Lemma} \ref{Lemma1} in the sense that under the conditions of \emph{Lemma} \ref{Lemma2}, the variances of $Y_{min}$ and $Y_{max}$ converge to known constants (as $K$ increases), and not to zero. Thus, under the conditions of \emph{Lemma} \ref{Lemma2}, the residuals $\mathcal{R}^{min}$ and $\mathcal{R}^{max}$ would not converge to zero. On the other hand, these residues do not depend on $K$, whereas the expected values of the extremes do \cite{Gumbel}. Thus, given a function $g(\cdot)$ that satisfies a regularity condition so that $\forall \alpha: \left|\frac{\partial{^i}{g(\alpha)}}{\partial{\alpha}{^i}}\right|<\infty;i\in{\mathbb{N}}$
\begin{IEEEeqnarray}{cCl}
E[g(Y_{min})]=\underbrace{{g}(E[Y_{min}])}_{\mathcal{O}\left((ln(K)\right)}+\underbrace{\mathcal{R}^{min}}_{\mathcal{O}\left(1\right)} \IEEEyesnumber \IEEEyessubnumber  \\
E[g(Y_{max})]=\underbrace{{g}(E[Y_{max}])}_{\mathcal{O}\left((ln(K)\right)}+\underbrace{\mathcal{R}^{max}}_{\mathcal{O}\left(1\right)} \IEEEyessubnumber \label{resmaxappdx1}
\end{IEEEeqnarray}
and for sufficiently large values of $K$, the following conclusions can be expressed:

\begin{IEEEeqnarray}{cCl}
E[g(Y_{min})]={g}(E[Y_{min}])\cdot{}\left(1+\mathcal{O}\left(\frac{1}{\ln{(K)}}\right)\right) \IEEEyesnumber \IEEEyessubnumber \label{ConclusionGminAp} \\
E[g(Y_{max})]={g}(E[Y_{max}])\cdot{}\left(1+\mathcal{O}\left(\frac{1}{\ln{(K)}}\right)\right) \IEEEyessubnumber \label{ConclusionGmaxAp}
\end{IEEEeqnarray}
which results in our main approximation:
\begin{IEEEeqnarray}{cCl}
E[g(Y_{min})]\approx{g}(E[Y_{min}]) \IEEEyesnumber \IEEEyessubnumber \label{ConclusionGminAp} \\
E[g(Y_{max})]\approx{g}(E[Y_{max}]) \IEEEyessubnumber \label{ConclusionGmaxAp}
\end{IEEEeqnarray}
for both finite and infinite sample spaces. Furthermore, the same conclusions hold in cases where the sample space is infinite, but is one-side bounded, e.g., the \emph{Gamma} distribution, the sample space of which is $\Omega=[0,\infty)$.

In addition, these conclusions can be easily expanded for the two-variables case, resulting in, assuming $g(\cdot)$ satisfies the same regularity condition of \emph{Lemma} \ref{Lemma1} and/or \emph{Lemma} \ref{Lemma2} for every variable, and sufficiently large $K$,
\begin{equation} \label{ConclusionGmixApp}
E[g(Y_{min},Y_{max})]\approx{g}(E[Y_{min}],E[Y_{max}])
\end{equation}
as for $K\rightarrow\infty$, the minimum and the maximum values taken from the same group become independent \cite{coles1999dependence}.

\subsection{Characteristic Values of Extremes}
In order to specify Eq. \eqref{MajorG} for a given distribution, we suggest to use the Characteristic Largest (or Smallest) Value \cite{Gumbel}.

The Characteristic Largest Value (CLV) and the Characteristic Smallest Value (CSV) were first introduced by \emph{Gumbel} in 1958 \cite{Gumbel}, as averages of extremes that are analogous to quantiles: the CLV is defined as $\mu_K$, so that from $K$ iid observations, exactly one observation is expected to be equal to or larger than $\mu_K$. In a similar manner, the CSV is defined as $\mu_1$, so that from $K$ iid observations, exactly one observation is expected to be equal to or smaller than $\mu_1$. The CSV and the CLV definitions lead to the relations
\begin{IEEEeqnarray}{cCl}
F_X(\mu{_1};\underline{\theta},K)=\frac{1}{K} \IEEEyesnumber \label{CLVG} \IEEEyessubnumber \label{CSVch2} \\
F_X(\mu{_K};\underline{\theta},K)=1-\frac{1}{K} \IEEEyessubnumber \label{CLVch2}
\end{IEEEeqnarray}
This means that, given a CDF of the original observation, $F_X({x;{\underline\theta}})$, which is analytically representable with respect to $K$, $F_X({x;\underline\theta,K})$, the explicit expressions of the CSV, $\mu_1$, and the CLV, $\mu_K$, can be directly extracted.

The characteristic largest and smallest values are closely related to the expected values of the extremes. In cases where the sample space is finite, the CLV and/or the CSV
converge to the boundaries of the sample space as $K$ increases, and thus, converge to the expected value of the extreme. Hence, for any given function $g(\cdot)$ that satisfies the regularity condition $\forall \alpha: \left|\frac{\partial{^i}{g(\alpha)}}{\partial{\alpha}{^i}}\right|<\infty;i\in{\mathbb{N}}$, the following hold:
\begin{IEEEeqnarray}{cCl} \label{ConclusionULem1}
E[g(Y_{min})]\xrightarrow[K\rightarrow\infty]{}{g}(\mu_1) \IEEEyesnumber \IEEEyessubnumber \label{ConclusionUmin} \\
E[g(Y_{max})]\xrightarrow[K\rightarrow\infty]{}{g}(\mu_K) \IEEEyessubnumber \label{ConclusionUmax} \\
E[g(Y_{min},Y_{max})]\xrightarrow[K\rightarrow\infty]{}{g}(\mu_1,\mu_K) \IEEEyessubnumber \label{ConclusionUmix}
\end{IEEEeqnarray}

However, in cases where the sample space is infinite and the conditions of \emph{Lemma} \ref{Lemma2} apply, the CLV and/or the CSV  converge to the \emph{mode}, i.e., the most probable value, of the extreme, as $K$ increases \cite{Gumbel}, and, as the variance of the GEV under these conditions is finite and independent of $K$, the following hold:
\begin{IEEEeqnarray}{cCl}
\frac{\mu_1}{E[Y_{min}]}\xrightarrow[K\rightarrow\infty]{}1 \IEEEyesnumber \IEEEyessubnumber \\ \label{eqlemma2minrep}
\frac{\mu_K}{E[Y_{max}]}\xrightarrow[K\rightarrow\infty]{}1 \IEEEyessubnumber \label{eqlemma2maxrep}
\end{IEEEeqnarray}
which, under the assumption $\forall \alpha: \left|\frac{\partial{^i}{g(\alpha)}}{\partial{\alpha}{^i}}\right|<\infty;i\in{\mathbb{N}}$, for sufficiently large $K$, yields
\begin{IEEEeqnarray}{cCl} \label{ConclusionULem2}
E[g(Y_{min})]\approx{g}(\mu_1) \IEEEyesnumber \IEEEyessubnumber \label{ConclusionUmin2} \\
E[g(Y_{max})]\approx{g}(\mu_K) \IEEEyessubnumber \label{ConclusionUmax2} \\
E[g(Y_{min},Y_{max})]\approx{g}(\mu_1,\mu_K) \IEEEyessubnumber \label{ConclusionUmix2}
\end{IEEEeqnarray}

Next, by implementing the set of Eqs. \eqref{ConclusionULem1} and \eqref{ConclusionULem2} directly on the different FIM expressions, the following approximations can be presented.
\begin{IEEEeqnarray}{cCl}
\widetilde{J}^{min}_{m,n}(\underline\theta)=N\cdot{g_{m,n}(\mu_1,\underline\theta)}\approx \IEEEyesnumber 
\sum_{i=1}^{N}E\left[g_{m,n}(z_i,\underline\theta)\right]=J^{min}_{m,n}(\underline\theta) \label{Japprx} \IEEEyessubnumber \label{Japprxminch2} \\
\widetilde{J}^{max}_{m,n}(\underline\theta)=N\cdot{g_{m,n}(\mu_K,\underline\theta)}\approx 
\sum_{i=1}^{N}E\left[g_{m,n}(z_i,\underline\theta)\right]=J^{max}_{m,n}(\underline\theta) \IEEEyessubnumber \label{Japprxmaxch2} \\
\widetilde{J}^{mix}_{m,n}(\underline\theta)=N\cdot{g_{m,n}(\mu_1,\mu_K,\underline\theta)}\approx\sum_{i=1}^{N}E\left[g_{m,n}(z_i,q_i,\underline\theta)\right]=J^{mix}_{m,n}(\underline\theta) \IEEEyessubnumber \label{Japprxmixch2}
\end{IEEEeqnarray}
which, given that $g(\cdot)$ is the known log-likelihood function, can be analytically expressed and solved.

Next, we define a set of matrices, $\mathcal{A}^i$, as follows.
\begin{IEEEeqnarray}{cCl}
J_{m,n}^{max}(\underline\theta)-J_{m,n}^{min}(\underline\theta)=\sum^{N}_{i=1}\mathcal{A}^i_{m,n} \IEEEyesnumber \label{Adef} 
\end{IEEEeqnarray}
where
\begin{IEEEeqnarray}{cCl}
\mathcal{A}^i_{m,n}= E\left\{\frac{\partial^2\log\left[f_{Y_{min_i}}(z_{i};{\theta})\right]}{\partial{\theta{_m}\theta{_n}}}\right.
\left.-\frac{\partial^2\log\left[f_{Y_{max_i}}(q_{i};{\theta})\right]}{\partial{\theta{_m}\theta{_n}}}\right\} \IEEEyesnumber
\end{IEEEeqnarray}
which, based on the approximations $\widetilde{J}^{min}(\underline\theta)$, $\widetilde{J}^{max}(\underline\theta)$, and $\widetilde{J}^{mix}(\underline\theta)$ (of Eq. \eqref{Japprx}), yields
{{
\begin{IEEEeqnarray}{cCl}
\mathcal{A}^i_{m,n}\underset{\underset{\forall{i}}{\uparrow}}{\approx}\mathcal{A}_{m,n}=\left.\frac{\partial^2\log\left[f_{Y_{min}}(z_i;{\theta})\right]}{\partial{\theta{_m}}\partial\theta{_n}}\right|_{z_i=\mu_1} - \IEEEyesnumber 
 \left.\frac{\partial^2\log\left[f_{Y_{max}}(q_i;{\theta})\right]}{\partial{\theta{_m}}\partial\theta{_n}}\right|_{q_i=\mu_K} \label{Agen}
\end{IEEEeqnarray}
which equals
\begin{IEEEeqnarray}{cCl}
\mathcal{A}_{m,n}=(K-1)\cdot\left[\frac{\partial{F_m}}{\partial\theta_n}\cdot\frac{\partial{F_m}}{\partial\theta_m}\cdot\frac{1}{(1-F_m)^2} \right. \label{AAA} 
\left.-\frac{\partial^2{F_m}}{\partial\theta{_n}\partial\theta{_m}}\cdot\frac{1}{1-F_m}+\frac{\partial{F_M}}{\partial\theta_n}\cdot\frac{\partial{F_M}}{\partial\theta_m}\cdot\frac{1}{(F_M)^2}-\frac{\partial^2F_M}{\partial\theta_{n}\partial\theta_m}\cdot\frac{1}{F_M}\right]+ \IEEEnonumber \\
+\frac{\partial^2f_m}{\partial\theta_{n}\partial\theta_{m}}\cdot\frac{1}{f_m}-\frac{\partial^2f_M}{\partial\theta_{n}\partial\theta_{m}}\cdot\frac{1}{f_M}
+{\frac{\partial{f_M}}{\partial\theta_n}\cdot\frac{\partial{f_M}}{\partial\theta_m}\cdot\frac{1}{(f_M)^2}}-{\frac{\partial{f_m}}{\partial\theta_n}\cdot\frac{\partial{f_m}}{\partial\theta_m}\cdot\frac{1}{(f_m)^2}} \IEEEyesnumber
\end{IEEEeqnarray}}}
where $F_m\equiv{F}_{X}(\mu_1;\theta)$, $F_M\equiv{F}_{X}(\mu_K;\theta)$, $f_m\equiv{f}_{X}(\mu_1;\theta)$, and $f_M\equiv{f}_{X}(\mu_K;\theta)$ (of Eq. \eqref{fgen}), which are known. Thus, the value of $\mathcal{A}$ can be directly calculated.

The value of $\mathcal{A}$ holds important information:
\begin{equation} \label{CasesA}
\begin{cases}
\mathcal{A} > 0, {\text{~~}} \Rightarrow {\text{~~}} \widetilde{J}^{max}(\underline\theta)>\widetilde{J}^{min}(\underline\theta) \\
\mathcal{A} < 0, {\text{~~}} \Rightarrow {\text{~~}} \widetilde{J}^{max}(\underline\theta)<\widetilde{J}^{min}(\underline\theta) \\
\mathcal{A} \approx 0, {\text{~~}} \Rightarrow {\text{~~}} \widetilde{J}^{min}(\underline\theta)\approx\widetilde{J}^{max}(\underline\theta) \\
\end{cases}
\end{equation}
Based on this analysis, the questioned inequalities of Eq. \eqref{MajorG} can be established for any relevant distribution of interest. Some specific cases are worth noting:
\begin{itemize}
\item{{Symmetric Distributions}: In the special case where the original observations, $\{x_i\}$ ; $\forall{i}$, follow a \emph{symmetric PDF}, in the sense that a value of $y$ exists, s.t. $f_X(y+\delta)=f_X(y-\delta)$ ; $\forall{\delta}\in{\mathcal{R}}$, either single-parameter (e.g., the \emph{uniform} distribution) or multi-parameter (e.g., the \emph{normal} distribution), the calculation of $\mathcal{A}$ yields $\mathcal{A}=0$. This is to be expected, since it can be proven\footnote{By substituting the variables s.t. $x_i\equiv-y_i$ ; $\forall{i}$, it can be shown that the general expressions of $J^{min}(\underline{\theta})$ and $J^{max}(\underline{\theta})$ are identical.} that for the symmetric case, $J^{min}(\underline{\theta})=J^{max}(\underline{\theta})$.}
\item{Non-Negative Distributions: Many real-world physical and socio-economical scenarios are modelled using PDFs of the form: $f_X(x<C)=0; C\geq{0}$. Examples of such cases are measurements of rain intensity,  earthquakes magnitudes, wind speeds, yearly income, and sun-solar flare intensity, among many other. In those cases, by implementing the methodology presented in \emph{Lemma 1} and \emph{Lemma 2}, it can be shown that for large value of $K$, (i.e., $K\rightarrow\infty$), $\mathcal{A}>0$. Thus, for many practical uses, given that $K$ is sufficiently large, one can consider the maxima to hold most of the information (with respect to parameter estimation). Furthermore, in cases where the distribution is non-positive, $f_X(x>C)=0; C\leq{0}$, such as when modeling the volume of water pending evaporation, the speed of free electrons in metals, etc., similar conclusions apply, where for sufficiently large $K$, $\mathcal{A}<0$.}
\end{itemize}


In order to demonstrate and analyze the proposed methodology, the original observations distribution will be assumed to be of the \emph{exponential} type. The exponential distribution was chosen due to various reasons: First, the exponential distribution is considered to be a good and solid model for various phenomena in many fields of interest (from rain-rate intensity \cite{salisu2010modeling} to income in the USA \cite{ExpEcon}). Second, 
the PDF of the maximum value taken from groups of $K$ samples of an iid distributed population that follows a distribution belonging to the exponential family, e.g., \emph{exponential}, \emph{Gamma}, $\chi^2$, or \emph{Normal}, will converge asymptotically to the \emph{Gumbel} distribution \cite{Gumbel}, as $K\rightarrow\infty$. Thus, the behavior of $\hat{\theta}_{max}$ and $\widetilde{J}^{max}(\theta)$ regarding the \emph{exponential} distribution may be used on other distributions of the \emph{exponential} family, once $K$ is sufficiently large. 

\section{Example: The Case of Exponential Distribution}
$z$ is said to follow the exponential distribution, with the parameter ${\theta}$, if $f_Z(z;\theta)=\frac{1}{\theta}e^{-\frac{1}{\theta}\cdot{z}}$, or, equivalently, $F_Z(z;\theta)=1-e^{-\frac{1}{\theta}\cdot{z}}$, so that $Pr\{Z\leq{z}\}=F_Z(z;\theta)$). The asymptotically optimal estimator $\hat{\theta}_{opt}$ can be implicitly expressed:
\begin{equation} \label{MLoptexp}
{\hat{\theta}}_{opt} = \frac{1}{N\cdot{K}}\sum\limits_{i=1}^{N\cdot{K}}{x_{i}}\IEEEyessubnumber
\end{equation}
where $x_i$ is the $i^{th}$ entry of the observation vector $\underline{x}$.
Similarly, the sub-optimal estimator $\hat{{\theta}}_L$ (of Eq. \eqref{MLEl}) can be expressed:
\begin{equation}
{\hat{\theta}}_{L} = \frac{1}{N\cdot{L}}\sum\limits_{p=1}^{N}\sum\limits_{l=1}^{L}{x_{p,l}} \label{lexp}
\end{equation}
where $x_{p,l}$ represents the $l^{th}$ sample within the $p^{th}$ group. Note that if $L=K$, then $\hat{\theta}_L\equiv\hat{\theta}_{opt}$, as expected.

Next, the extreme-based estimates for the exponential case can be directly formalized:
\begin{IEEEeqnarray}{cCl} \label{extremeexpML}
\hat{{\theta}}_{min}=\arg\max\limits_{{\theta}}\left[\left(\frac{K}{\theta}\right)^N{e^{-\frac{K}{\theta}\cdot\sum\limits_{i=1}^{N}{y_{min_i}}}}\right]= \IEEEnonumber 
\frac{\sum\limits_{i=1}^{N}y_{min_i}}{N} \IEEEyesnumber \label{minexpml} \IEEEyessubnumber \label{MLEminch3} \\
\hat{{\theta}}_{max}=\arg\max\limits_{{\theta}}\left[\left(\frac{K}{\theta}\right)^N{e^{-\sum\limits_{i=1}^{N}\frac{y_{max_i}}{\theta}}}\right. \left.\cdot\right. 
\left.\prod_{i=1}^{N}\left(1-e^{-\frac{{y_{max_i}}}{\theta}}\right)^{K-1}\right] \IEEEyessubnumber \label{MLEmaxch3} \\ \label{maxexpml}
\hat{{\theta}}_{mix}=\arg\max\limits_{{\theta}}\left[\left(\frac{K(K-1)}{\theta^2}\right)^N{e^{\frac{-\sum\limits_{i=1}^{N}\left(y_{min_i}+y_{max_i}\right)}{\theta}}}\right. \left.\right. 
\left.\cdot\prod_{i=1}^{N}\left(e^{-\frac{{y_{min_i}}}{\theta}}-e^{-\frac{{y_{max_i}}}{\theta}}\right)^{K-2}\right] \IEEEyessubnumber \label{MLEmixch3} \label{mixexpml}
\end{IEEEeqnarray}
where $y_{min_i}$ and $y_{max_i}$ are the minimum and maximum values of the $i^{th}$ ($1\leq{i}\leq{N}$) interval, respectively. It is noteworthy here that $\hat{\theta}_{min}$, which has a closed analytical expression, is equal to $\hat{\theta}_{L=1}$. This property of $\hat{\theta}_{min}$ is explained via the exponential minima properties, the distribution of which remains \emph{exponential}, with the parameter $\theta/K$ \cite{Gumbel}.

The corresponding FIMs of the estimators $\hat{\theta}_{opt}$, $\hat{\theta}_L$, and $\hat{\theta}_{min}$, $\hat{\theta}_{max}$, and $\hat{\theta}_{mix}$ (of Eqs. \eqref{MLoptexp},\eqref{lexp}, and \eqref{extremeexpML}), can now be formalized:
{\small{
\begin{IEEEeqnarray}{cCl}
{J^{opt}({\theta})}= \frac{N\cdot{K}}{\theta^2} \IEEEyesnumber \label{FIMexp} \IEEEyessubnumber \\ \label{FIMexpopt}
{J^{L}({\theta})}= \frac{N\cdot{L}}{\theta^2} \IEEEyessubnumber \label{FIMexpl} \\
{J^{min}({\theta})}= \frac{N}{\theta^2} \IEEEyessubnumber \label{FIMexpmin} \\
{J^{max}({\theta})}= -\frac{N}{\theta^2} + \frac{2}{\theta^3}\sum\limits_{i=1}^NE[y_{max_i}]+\frac{(K-1)}{\theta^3}  \IEEEyessubnumber \label{FIMexpmax} 
\cdot\sum\limits_{i=1}^NE\left[\frac{\left(\frac{y_{max_i}^2}{\theta}-2y_{max_i}\right)\cdot{}e^{\frac{-y_{max_i}}{\theta}}}{\left(1-e^{\frac{-y_{max_i}}{\theta}}\right)}\right.  \left.+\frac{\frac{y_{max_i}^2}{\theta}e^{\frac{-2y_{max_i}}{\theta}}}{\left(1-e^{\frac{-y_{max_i}}{\theta}}\right)^2}\right] \\
{J^{mix}({\theta})}= -\frac{2N}{\theta^2} + \frac{2}{\theta^3}\sum\limits_{i=1}^NE[y_{min_i}+y_{max_i}]+ \IEEEyessubnumber \label{FIMexpminmax} 
+\frac{(K-2)}{\theta^3}\sum\limits_{i=1}^NE\left[\frac{\left({y_{min_i}}e^{\frac{-y_{min_i}}{\theta}}-{y_{max_i}}e^{\frac{-y_{max_i}}{\theta}}\right)^2}{\theta\cdot\left(e^{\frac{-y_{min_i}}{\theta}}-e^{\frac{-y_{max_i}}{\theta}}\right)^2}\right. + \IEEEnonumber \\
\left. +\frac{\left({2y_{min_i}}-\frac{y_{min_i}^2}{\theta}\right)\cdot{e}^{\frac{-y_{min_i}}{\theta}}}{e^{\frac{-y_{min_i}}{\theta}}-e^{\frac{-y_{max_i}}{\theta}}}\right.
-\left.\frac{\left({2y_{max_i}}-\frac{y_{max_i}^2}{\theta}\right)\cdot{e}^{\frac{-y_{max_i}}{\theta}}}{e^{\frac{-y_{min_i}}{\theta}}-e^{\frac{-y_{max_i}}{\theta}}}\right]  
\end{IEEEeqnarray}}}
and, while $J^{opt}(\theta)$, $J^{L}(\theta)$, and $J^{min}(\theta)$ are explicitly presented, in order to simplify the expressions of $J^{max}(\theta)$ and $J^{mix}(\theta)$, we use the CLV (of Eq. \eqref{CLVch2}) and the CSV (of Eq. \eqref{CSVch2}) approximations. For the exponential distribution with the parameter $\theta$, the CSV, $\mu^{exp}_1$, and the CLV, $\mu^{exp}_K$, take the forms (for $K>1$)
\begin{IEEEeqnarray}{cCl} \label{CLV}
\mu^{exp}_{1}= \theta\cdot{ln\left({\frac{K}{K-1}}\right)} \IEEEyesnumber \IEEEyessubnumber \\
\mu^{exp}_{K}= \theta\cdot{ln({K})} \IEEEyessubnumber
\end{IEEEeqnarray}
and therefore, substituting $\mu^{exp}_1$ and $\mu^{exp}_K$ into Eqs. \eqref{FIMexpmax} and \eqref{FIMexpminmax} yields (for $K>2$)
\begin{IEEEeqnarray}{cCl}
\widetilde{J}^{max}(\theta)=\frac{N}{\theta^2}\cdot{}\left[\frac{K\cdot{ln}^2(K)}{K-1}-1\right] \IEEEyesnumber \label{FIMmaxapp} \IEEEyessubnumber \label{FIMapprxmaxch3} \\
\widetilde{J}^{mix}(\theta)=\frac{2N}{\theta^2}\cdot{}\left[\frac{(K-1)\cdot\left({ln}(K)-{ln}\left(\frac{K}{K-1}\right)\right)^2}{2(K-2)} 
+K\cdot{ln}\left(\frac{K}{K-1}\right)-1\right] \IEEEyessubnumber \label{FIMapprxmixch3}
\end{IEEEeqnarray}

It is worth noting that, although unnecessary, a comparison of the exact (analytical solvable) expression of $J^{min}(\theta)$, which is available for the exponential case, and the approximated expression, $\widetilde{J}^{min}(\theta)$ (via Eq. \eqref{Japprxminch2}), demonstrates the accuracy of the approximation. The approximation accuracy improves as $K$ increases, since $\widetilde{J}^{min}(\theta)\xrightarrow{K\rightarrow\infty}J^{min}(\theta)$). However, even for $K$ as small as $K=10$, the difference between $\widetilde{J}^{min}(\theta)$ and ${J}^{min}(\theta)$ is only $\approx{5\%}$. For $K=100$, the difference already drops to $\approx{0.5\%}$. 

Finally, by combining Eq. \eqref{MajorG} with Eqs. \eqref{FIMexp} and \eqref{FIMmaxapp}, the main relationship can be presented for the exponential case:
\begin{IEEEeqnarray}{cCl}
J^{L=1}(\theta){=}J^{min}(\theta)<\widetilde{J}^{max}(\theta) 
 <\widetilde{J}^{mix}(\theta)<J^{opt}(\theta){=}J^{{L=K}}(\theta) \IEEEyesnumber  \label{Main}
\end{IEEEeqnarray}

\subsection[Relative Contribution of the Min/Max Values for Estimation \\ Purposes]{The Relative Contribution of the Min/Max for Estimation \\ Purposes} \label{ContSimch2}
The relationship between the different FIMs (of Eq. \eqref{Main}) provides powerful insights. First, as can be seen in Eqs. \eqref{FIMexpl} and \eqref{FIMexpmin}, the minimum values-based estimator, $\hat{\theta}_{min}$, is identical to the estimator $\hat\theta_{L=1}$. This fact, which may be counter-intuitive, shows that fetching the minimum values themselves does not give an advantage over a random selection of a single value in the group. In other words, knowing the observed minimum value of each group for a given number of groups,
each of which constitutes $K$ iid exponential distributed observations, is identical (with respect to one parameter-estimation performance) to knowing a single randomly selected observation from each group. The situation is different regarding the maxima. Knowing the maximum observed values yields the maximum based estimator, $\hat\theta_{max}$, which achieves a better estimation performance than using only one value from the sequence $\left\{x_i\right\}^K_{i=1}$. Thus, when access to the observations is constrained such that only a single observation per group is allowed, which can be the minimum observed value, the maximum observed value, or a single randomly selected observation within the group, using the maximum observed values is the preferred choice from the parameter estimation accuracy point of view. Second, as expected, access to both the minimum \emph{and} the maximum values further enhances the estimation performance. The optimal estimator, $\hat\theta_{opt}$, obviously achieves the best performance.

Rewriting $J^{min}(\theta)$, $\widetilde{J}^{max}(\theta)$, and $\widetilde{J}^{mix}(\theta)$ in a more general form, with respect to $K$, yields
\begin{IEEEeqnarray}{cCl}
J^{min}(\theta)=\frac{N}{\theta^2}\cdot\mathcal{O}(1) \IEEEyesnumber \label{orderofmag} \IEEEyessubnumber \label{FIMtrueminch3} \\
\widetilde{J}^{max}(\theta)=\frac{N}{\theta^2}\cdot\mathcal{O}(ln^2(K)) \IEEEyessubnumber \label{FIMtruemaxch3} \\
\widetilde{J}^{mix}(\theta)=\frac{N}{\theta^2}\cdot\mathcal{O}(ln^2(K)) \IEEEyessubnumber \label{FIMtruemixch3}
\end{IEEEeqnarray}

where $\mathcal{O}(\cdot)$ stands for order of magnitude. While $J^{min}(\theta)$ is independent of $K$, both $\widetilde{J}^{max}(\theta)$ and $\widetilde{J}^{mix}(\theta)$ depend on $K$ in a similar manner. Furthermore, it can be shown that:
\begin{IEEEeqnarray}{cCl}
\widetilde{J}^{mix}(\theta)-\widetilde{J}^{max}(\theta) \leq J^{L=1}(\theta)=\frac{N}{\theta^2} \IEEEyesnumber \label{Converg} \IEEEyessubnumber \\
\text{where~~~} \widetilde{J}^{mix}(\theta)-\widetilde{J}^{max}(\theta) \underset{\underset{K\rightarrow\infty}{\uparrow}}{\rightarrow} \frac{N}{\theta^2} \IEEEyessubnumber
\end{IEEEeqnarray}
from which, it can be seen that the information stored in the maximum observed values increases as the group size $K$ increases. When the maximum observed values are available, the additional information from the minimum observed values is relatively small; indeed, as $K$ increases, the additional information from the minimum observed values increases, but is bounded by $N/\theta^2$, since, as $K\rightarrow\infty$, the added information converges to $N/\theta^2=J^{min}(\theta)$. This result is  expected, since it has been shown that for $K\rightarrow\infty$, the extremes, i.e., the minimum and the maximum values, become independent \cite{coles1999dependence}, and therefore,
\begin{equation} \label{conto1ch3}
J^{mix}(\theta)\underset{\underset{K\rightarrow\infty}{\uparrow}}{\rightarrow}J^{min}(\theta)+J^{max}(\theta)
\end{equation}

TABLE \ref{Jcomp} describes the behavior of the different (approximated) FIMs as a function of the group size $K$. The normalized values of the FIMs, expressed in Eqs. \eqref{FIMexp} and \eqref{FIMmaxapp}, correspond to the equivalent value of $L$ out of the $K$ samples in each group, defined as the \emph{L-equivalent} value (see Eqs. \eqref{CRLBla} and \eqref{CRLBlb}). As Eq. \eqref{Main} suggests $\forall{K}:J^{min}(\theta){=}J^{L=1}$, in Table \ref{Jcomp} the value of $J^{min}(\theta)$ is constant for all $K$. On the other hand, the estimates $\hat{\theta}_{max}$ and $\hat{\theta}_{mix}$ become more accurate as $K$ increases, with $\hat{\theta}_{mix}$ always being better. However, the difference between the performance of the estimates $\hat{\theta}_{max}$ and $\hat{\theta}_{mix}$ converges to $N/\theta^2$, which translates into adding only one sample, as can be seen in the column of $\Delta\equiv{\widetilde{J}}^{mix}(\theta)-\widetilde{J}^{max}(\theta)$. This is again, to be expected, as seen in Eq. \eqref{conto1ch3}.

\begin{table}[!ht]
\centering
\center{\caption{\emph{L-equivalent} values of $J^{opt}(\theta)$, $J^{min}(\theta)$, $\widetilde{J}^{max}(\theta)$, $\widetilde{j}^{mix}(\theta)$,  $J^{opt}(\theta)$, and $\Delta\equiv{\widetilde{J}}^{mix}(\theta)-\widetilde{J}^{max}(\theta)$, for selected values of $K$.} \label{Jcomp}}
\centering
\renewcommand{\arraystretch}{1.2}
\begin{tabular}{c c c c c c}
\hline \hline
$K$ & $\left[\frac{{J}^{\text{opt}}(\theta)}{N/\theta^2}\right]$ & $\left[\frac{J^{\text{min}}(\theta)}{N/\theta^2}\right]$ & $\left[\frac{\widetilde{J}^{\text{max}}(\theta)}{N/\theta^2}\right]$ & $\left[\frac{\widetilde{J}^{\text{mix}}(\theta)}{N/\theta^2}\right]$ & $\left[\frac{\Delta}{N/\theta^2}\right]$  \\
\hline \hline
5  & $5$ & $1$ & $2.238$ & $2.794$  & $0.556$\\
\hline
10 & $10$ & $1$ & $4.891$ & $5.539$  & $0.648$\\
\hline
25  & $25$ & $1$ & $9.793$ & $10.580$ & $0.787$\\
\hline
50  & $50$ & $1$ & $14.616$ & $15.482$ & $0.866$\\
\hline
100  & $100$ & $1$ & $20.422$ & $21.341$ & $0.919$\\
\hline
1000  & $1000$ & $1$ & $46.765$ & $47.752$ & $0.987$\\
\hline \hline
\end{tabular}
\end{table}

\subsection{Simulation Results} \label{simm}
In order to validate the relationship of Eq. \eqref{Main} and the insights presented in Table \ref{Jcomp}, the following simulation was designed.
\begin{itemize}
\item[(i)]{One hundred groups of $K$ iid exponentially distributed variables, with parameter $\theta=1$, were generated, i.e., $N=100$.}
\item[(ii)]{For each group, the minimum and the maximum values were logged. Thus, a vector containing 100 minimum observed values and a vector containing 100 maximum observed values were created.}
\item[(iii)]{Using these two vectors as inputs, the estimation processes of Eqs. \eqref{MLEminch3}, \eqref{MLEmaxch3}, and \eqref{MLEmixch3} were performed, in order to obtain the three estimates $\hat\theta_{min}$, $\hat\theta_{max}$, and $\hat\theta_{mix}$. In addition, for comparison, the optimal estimate, $\hat\theta_{opt}$, was calculated from the full dataset.}
\item[(iv)]{Steps (i)-(iii) were repeated 10000 times.}
\item[(v)]{Steps (i)-(iv) were repeated for  $K$ ranging from $K=5$ to $K=100$.}
\end{itemize}


In addition to the simulation, the analytical variances, i.e., the CRLBs, of $\hat\theta_{min}$ (exact), $\hat\theta_{max}$ (approximated), and $\hat\theta_{mix}$ (approximated) were calculated using Eqs. \eqref{FIMexp}, and \eqref{FIMmaxapp}.

The simulation results are depicted in {\textbf{Figure} \ref{b15to540}}. The calculated variances of $\hat\theta_{min}$, $\hat\theta_{max}$, and $\hat\theta_{mix}$, and their corresponding asymptotic variances, i.e., CRLBs, as a function of $K$, for $\theta=1$ and $N=100$ are shown. In addition, the calculated variance of $\hat\theta_{opt}$ and its corresponding asymptotic CRLB is drawn. As expected, the variance of ${\hat{\theta}}_{min}$ is independent of $K$, and equals $\theta^2/N=10^{-2}$), whereas the variances of ${\hat{\theta}}_{max}$ and ${\hat{\theta}}_{mix}$ are $K$-dependent, and decrease as $K$ increases. Note that for small values of $K$ ($\approx{K}<15$), the accuracy of the approximated CRLBs drops (due to Eq. \eqref{eqlemma2maxrep}). For larger values of $K$, the approximated CRLBs are rather accurate, and describe well the values of $var({\hat{\theta}}_{max})$ and $var({\hat{\theta}}_{mix})$. The simulated results are encouraging. First, note that the variance of $\hat\theta_{min}$ is constant, and follows the expression suggested in Eq. \eqref{FIMexpmin}. Second, as expected, it can be seen that, regardless of the value of $K$ (for $\approx{K}>15$), the performance of $\hat\theta_{max}$ and $\hat\theta_{mix}$ is described by the approximated expressions of Eqs. \eqref{FIMapprxmaxch3} and \eqref{FIMapprxmixch3}. Furthermore, it can be seen that both $\hat\theta_{max}$ and $\hat\theta_{mix}$ perform significantly better than $\hat\theta_{min}$, which agrees with Eqs. \eqref{FIMtrueminch3}, \eqref{FIMtruemaxch3}, and \eqref{FIMtruemixch3}. Thus, we conclude that the results of the simulation validate the suggested analysis. In addition, note that, even for relatively small values of $K\approx10$, the suggested approximations of $\widetilde{J}^{max}(\theta)$ and $\widetilde{J}^{mix}(\theta)$, although not as accurate as for higher values of $K$, are still usable, as their values are still relatively close to the simulated-based variances.

\begin{figure}[ht!]
\centering
\includegraphics[width=150mm]{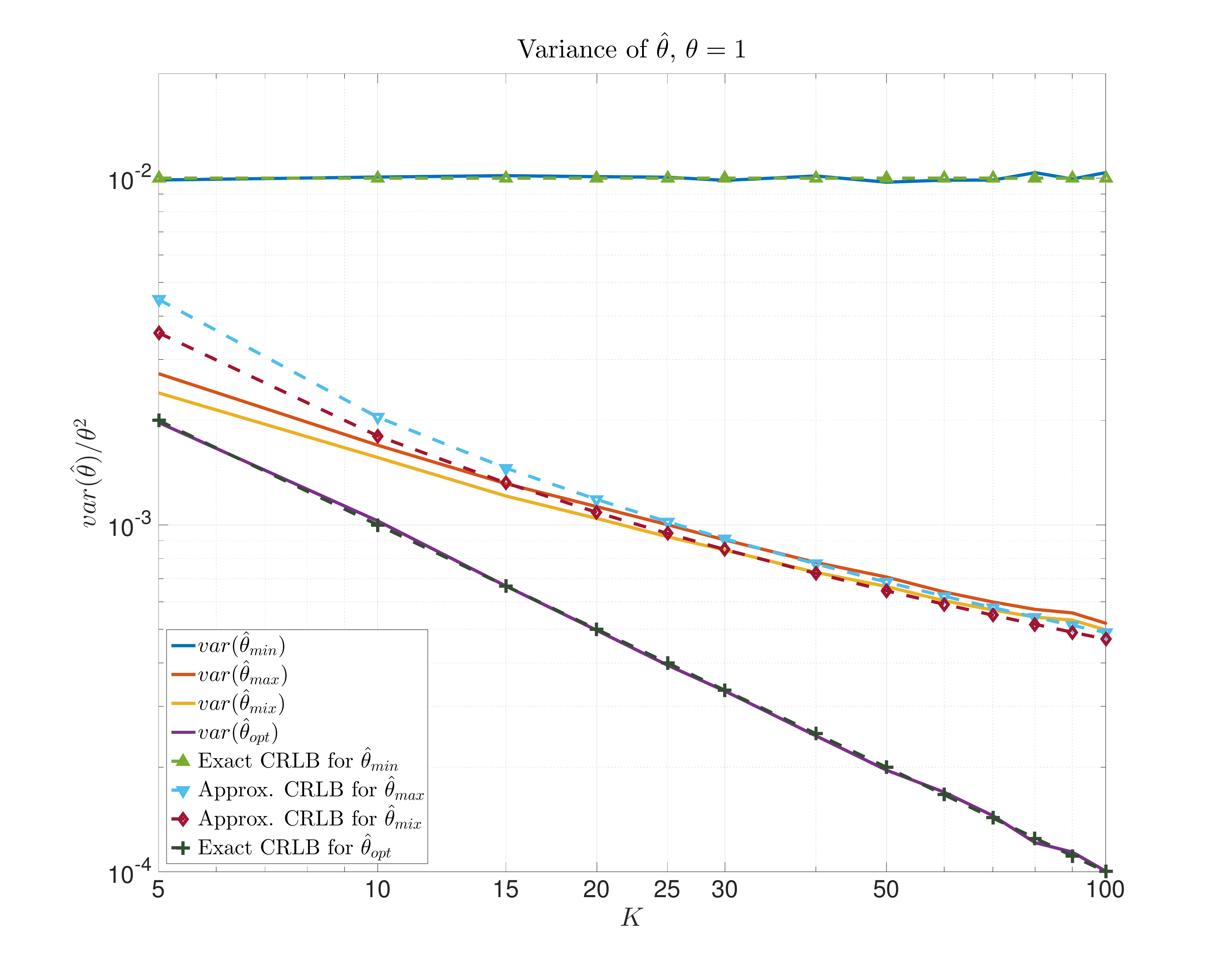}
\caption{Calculated variances of $\hat\theta_{min}$, $\hat\theta_{max}$, $\hat\theta_{mix}$, and their corresponding asymptotic variances, i.e., CRLBs, as a function of $K$, for $N=100$.} \label{b15to540}
\end{figure}

\subsection{Demonstration of the accuracy of the Fisher information matrix approximations}
The problem of estimating the original PDF parameter vector from a set of the upper or lower record values has been addressed in the field of record theory \cite{records,recordsparam1986}. Therefore, it is interesting to compare the proposed approximations with previously presented results.
The FIM properties, using only the minimum or only the maximum values of an exponentially distributed population, were discussed in \cite{hofmann2003}, and we now compare our findings with these properties.

Specifically, in \cite{hofmann2003} it was shown that the minimum-based estimate (for the exponential distribution) can be solved analytically, and has the \emph{L-equivalent} value of 1, regardless of $K$. This matches our findings, as we concluded that $J^{min}(\theta)=N/\theta^2$, and strengthens our CSV approximation, as we showed that $\widetilde{J}^{min}(\theta)\approx{J}^{min}(\theta)$.

Regarding the maximum based estimation, in \cite{hofmann2003} it was suggested that exponential order statistics properties be used to approximate $J^{max}(\theta)$ (see \cite{arnold1992first}, \emph{example 7.3.4}). The values of this approximation of $J^{max}(\theta)$, defined by ${J}^{max}_{2003}(\theta)$, for selected values of $K$, is presented, in an \emph{L-equivalent} normalized form, in Table \ref{Rcomp}, alongside the \emph{L-equivalent} normalized values of the proposed approximated CLV-based $\widetilde{J}^{max}(\theta)$, and the empirical variance of $\hat\theta_{max}$ calculated from the simulation described in this section, the results of which presented are in {\textbf{Figure} \ref{b15to540}.

As  can be seen in TABLE \ref{Rcomp}, although the values of ${J}^{max}_{2003}(\theta)$ are closer to the simulation results, the difference in the \emph{L-equivalent} values of $\widetilde{J}^{max}(\theta)$ and ${J}^{max}_{2003}(\theta)$ is less than $1.6$ for the entire range of $5\leq{K}\leq{200}$. This is encouraging, especially as the $\widetilde{J}^{max}(\theta)$ approximation is not confined to the exponential case and can be utilized for every distribution of interest.

Finally, it is worth noting that we could not compare the approximated results of $\widetilde{J}^{mix}(\theta)$, since the analysis of situations in which both of the extremes, i.e., the minimum \emph{and} the maximum values, are used at the same time has attracted less interest in the past, and no comparable publications that address the values of $\widetilde{J}^{mix}(\theta)$ are known to us. However, based on the presented comparison for both $\widetilde{J}^{min}(\theta)$ and $\widetilde{J}^{max}(\theta)$, in combination with the agreement of $\widetilde{J}^{mix}(\theta)$ with the simulation results (see Table \ref{Jcomp}),
the validity of the CSV- and the CLV-based approximations is promising.


\begin{table}[!ht]
\caption{Normalized values of $\widetilde{J}^{max}(\theta)$, ${J}^{max}_{2003}(\theta)$), and the simulation results, $var^{-1}(\hat\theta_{max})$, for selected values of $K$.} \label{Rcomp}
\centering
\renewcommand{\arraystretch}{1.2}
\begin{tabular}{c c c c}
\hline \hline
 & Approximated & Asymp. Exact & Empirical \\
$K$ & $\left[\frac{\widetilde{J}^{\text{max}}(\theta)}{N/\theta^2}\right]$ & $\left[\frac{{J}^{\text{max}}_{2003}(\theta)}{N/\theta^2}\right]$ & $\left[\frac{var^{-1}(\hat\theta_{max})}{N/\theta^2}\right]$ \\
\hline \hline
5  & $2.24$ & $3.66$ & $3.66$ \\
\hline
10 & $4.89$ & $5.86$ & $5.89$ \\
\hline
20 & $8.45$ & $8.87$ & $8.86$ \\
\hline
30 & $10.97$ & $11.05$ & $11.05$ \\
\hline
40 & $12.96$ & $12.79$ & $12.83$ \\
\hline
50 & $14.62$ & $14.26$ & $14.13$ \\
\hline
100 & $20.42$ & $19.45$ & $19.21$ \\
\hline
200 & $27.21$ & $25.63$ & $25.71$ \\
\hline \hline
\end{tabular}
\end{table}

It is worth noting that the approximations of $\widetilde{J}^{max}(\theta)$ and $\widetilde{J}^{mix}(\theta)$ are not the bounds of the actual values of the FIMs. For very large values of $K$, the approximations $\widetilde{J}^{max}(\theta)$ and $\widetilde{J}^{mix}(\theta)$ seem to be less accurate than for moderate values of $K$. This phenomenon is to be expected. In order to find an analytical solution to the presented problem, the expected values of the extremes were replaced with the CSV and the CLV, based on the approximations presented in Eqs. \eqref{ConclusionUmin}, \eqref{ConclusionUmax2}, and \eqref{ConclusionUmix2}. However, the CSV and the CLV converge to the \emph{modes} of the extremes (under the conditions of \emph{Lemma} \ref{Lemma2}), and not to the expected values. The difference between the mode and the expected value of the maximum observed value, is finite, and independent of $K$, whereas both the CLV and the expected value of $Y_{max}$, $E[Y_{max}]$, increase as $K$ increases. Thus, the difference between the CLV, $\mu^{exp}_K$, and the expected value, $E[Y_{max}]$, can be neglected for most values of $K$. However, as the \emph{Gumbel} distribution, to which $Y_{max}$ converges in distribution, is positively skewed, this difference introduces a small bias into the expressions of the FIM. This bias, although small, becomes non-negligible for very large values of $K$, as the variance of $\theta$ decreases. This fact is interesting, but unimportant for practical purposes, as it starts to affect the accuracy of the FIM approximations only for very large values of $K$ ($K>>100$), which is generally non-realistic for real-world usage, as for such large $K$, the information held in the extremes is only a fraction of the information held in the entire set of measurements (e.g., as presented in TABLE \ref{Jcomp}, for $K=1000$, estimating $\theta$ using the set of the maximum values is equivalent (performance wise) to an estimate based on 47 measurements per interval, which is a merely $4.7\%$ of the available original measurements).

\section{Conclusion}
This paper deals with the information in extreme values with respect to parameter estimation. We presented a new approach which uses the Characteristic Values of the extremes, by which we were able to establish simple and analytical solvable approximations to the expressions of the Fisher Information Matrices of the estimates based on either the minimum, the maximum, or the minimum \emph{and} the maximum measurements.  

Based on these approximations, we designed a new tool to evaluate the accuracy of estimates which use only the minimum or the maximum measurements, relative to the optimal estimation, which uses the entire dataset of measurements. We showed that the presented methodology gives simple and practical expressions, which are solvable, yet still capable to approximate the performance of the estimates. Furthermore, we expanded the proposed practical expressions and were able to analyse the performance of the estimate based on both extremes (i.e., the minimum \emph{and} the maximum values, combined).

We demonstrated our tool and performed an analysis for the case of the exponential distribution, which is an important example, as it is used to model many naturally occurring phenomena, especially for environmental monitoring. This fact makes the exponential case important for real-world applications. In addition, the exponential case is one of the few examples that were studied previously, and allowed us to perform a detailed comparison of our approximations with past results. Moreover, as the exponential distribution share commonality with many other distributions (in which, the extreme behaviour is similar), such as the log-normal and the gamma, our performed analysis on the exponential case, which demonstrates the importance of the maximum values rather than the minimum values (with respect to the accuracy of the outcome estimates) can be directly applied to other distribution families.

To conclude, we believe that the new approach and tools we have presented in this paper hold the potential to be used in many current and future applications, especially in the emerging field of IoT sensing, where the logging and transmitting of less observations is desired, for instance, where the cost per additional measurement is high.

\singlespacing

\bibliography{mybibJan2017Drive}
\bibliographystyle{ieeetr}

\end{document}